\newtheorem{rmk}{\textbf{Remark}}
\newtheorem{prop}{\textbf{Proposition}}
\newtheorem{proof}{\textbf{Proof}}
\newtheorem{lma}{\textbf{Lemma}}
\begin{document}

%
\title{\huge{ARQ with Adaptive Feedback for Energy Harvesting Receivers}}
\author{\IEEEauthorblockN{Yuyi Mao{$^\dagger$}, Jun Zhang{$^\dagger$}, and K. B. Letaief{$^{\dagger\ast}$}, \emph{Fellow, IEEE}}
\IEEEauthorblockA{$^\dagger$Dept. of ECE, The Hong Kong University of Science and Technology, Clear Water Bay, Hong Kong\\
$^{\ast}$Hamad bin Khalifa University, Doha, Qatar\\
Email: \{ymaoac, eejzhang, eekhaled\}@ust.hk}
\thanks{This work is supported by the Hong Kong Research Grants Council under Grant No. 610212.}
}
\maketitle

\begin{abstract}
Automatic repeat request (ARQ) is widely used in modern communication systems to improve transmission reliability. In conventional ARQ protocols developed for systems with energy-unconstrained receivers, an acknowledgement/negative-acknowledgement (ACK/NACK) message is fed back when decoding succeeds/fails. Such kind of non-adaptive feedback consumes significant amount of energy, and thus will limit the performance of systems with energy harvesting (EH) receivers. In order to overcome this limitation and to utilize the harvested energy more efficiently, we propose a novel ARQ protocol for EH receivers, where the ACK feedback can be adapted based upon the receiver's EH state. Two conventional ARQ protocols are also considered. By adopting the packet drop probability (PDP) as the performance metric, we formulate the throughput constrained PDP minimization problem for a communication link with a non-EH transmitter and an EH receiver. Optimal reception policies including the sampling, decoding and feedback strategies, are developed for different ARQ protocols. Simulation results will show that the proposed ARQ protocol not only outperforms the conventional ARQs in terms of PDP, but can also achieve a higher throughput.
\end{abstract}

\begin{keywords}
Green communications, energy harvesting, ARQ, reception policy, feedback strategy, Markov decision process.
\end{keywords}
%
\IEEEpeerreviewmaketitle
\section{Introduction}
Energy harvesting (EH) has recently emerged as a promising technique to provide renewable energy sources for wireless systems \cite{YMao1506}. As EH devices can capture ambient recyclable energy, such as solar and wind energy, the lifespan of communication systems can be prolonged significantly. However, due to their intermittent and sporadic nature, conventional communication protocols may not be capable of securing full benefits of the harvested energy \cite{SUlukus1503}. Consequently, protocols tailored for EH communication systems have been developed in recent years \cite{OOzel1109}-\cite{Aprem1310}. Nevertheless, most of the existing works focus on systems with EH transmitters, and few results are available for EH receivers.

In some scenarios, e.g. communicating over short distances, high data rates can be achieved with a relatively small transmit power, and the energy consumption associated with the complex detection and decoding operations at the receiver becomes dominant \cite{SCui0509}. This motivates the recent investigation of communication systems with EH receivers. The optimal packet sampling and decoding policies for an EH receiver to maximize packet throughput were developed in \cite{Yates1503}. By jointly considering the EH transmitter and receiver, i.e., a dual EH link, a general utility maximization problem was solved in \cite{Tutuncuoglu12}. In order to improve transmission reliability, automatic repeat request (ARQ) based packet retransmission has been introduced for systems with EH receivers \cite{MSharma1412}-\cite{AYadav1506}. In \cite{MSharma1412}, a dual EH link was modeled as a discrete-time Markov chain, and the packet drop probabilities (PDPs) under basic ARQ and hybrid ARQ with chase combining (HARQ-CC) were analyzed, respectively. A suboptimal power allocation policy was proposed for dual EH links to minimize the PDP under an acknowledgement (ACK) based ARQ in \cite{SZhou1503}. Moreover, to avoid wasting of the harvested energy, a selective-sampling scheme that allows the EH receiver to sample part of a packet was proposed in \cite{AYadav1506}.

So far, the ARQ protocols adopted  for systems with EH receivers, e.g. \cite{MSharma1412}-\cite{AYadav1506}, are inherited from the conventional ones developed for energy-unconstrained receivers \cite{PWu1112}, where an ACK message should be fed back when decoding succeeds, and a negative-acknowledgement (NACK) message should be sent back to the transmitter when decoding fails. Such kind of aggressive and non-adaptive feedback strategy consumes significant amount of energy, which may lead to energy outage for sampling and decoding, and thus will limit the performance of systems with energy-scarce EH receivers.

To overcome this limitation, a novel ARQ protocol with adaptive feedback will be proposed in this paper, which enables ACK feedback management and helps utilize the harvested energy more efficiently. As a comparison, two conventional ARQ protocols will also be studied, one without feedback and one with non-adaptive feedback. By adopting the PDP as the performance metric, we formulate the throughput constrained PDP minimization problem for a point-to-point wireless link with a non-EH transmitter and an EH receiver. The optimal reception policies, including the sampling, decoding and feedback strategies, are investigated. For the ARQ protocol without feedback, a myopic policy is shown to be optimal. For ARQs with feedback, an optimal iterative algorithm is developed with the aid of the \emph{Dinkelbach approach} and \emph{Markov decision process} (MDP) techniques. Simulation results demonstrate the effectiveness of the obtained optimal reception policies. It is shown that the proposed ARQ protocol not only outperforms the conventional ARQs in terms of PDP, but it is also competent to achieve a higher throughput.

The organization of this paper is as follows. We introduce the system model and the proposed ARQ protocol in Section II, while the throughput constrained PDP minimization problem is formulated in Section III and the optimal reception policies for different ARQ protocols are developed in Section IV. Simulation results are presented in Section V, and we will conclude this paper in
Section VI.

\section{System Model}

We consider a communication link with a non-EH transmitter and an EH receiver. Time is slotted, and the time slot length, system bandwidth and background noise variance are normalized. We assume the channel experiences independent and identically distributed (i.i.d.) block fading among different time slots, and denote the channel power gain at the $t$th time slot as $|h_{t}|^{2}$, with the cumulative distribution function (CDF) given by $F_{H}\left(x\right)$. 

\subsection{Energy Model at the Receiver}
The EH process is modeled as successive energy packet arrivals, i.e., at the beginning of the $t$th time slot, $E_{H,t}$ units of energy arrives at the EH receiver. Without loss of generality, we assume $\{E_{H,t}\}$ are i.i.d. discrete random variables, which take values from set $\mathcal{E}\triangleq \{e_{n}\}$ ($e_{n}$'s are the possible amounts of the harvested energy), and distribute according to $\mathbb{P}_{\mathcal{E}}\left(e\right),e\in\mathcal{E}$. A battery with capacity $B_{\max}$ is deployed to store the harvested energy, and the battery energy level $b_{t}$ at the beginning of the $t$th time slot evolves according to
\begin{equation}
b_{t+1}=\min\{b_{t}-E_{c,t}+E_{H,t+1},B_{\max}\},t=1,2,\cdots
\label{batterydyn}
\end{equation}
where $b_{1}=E_{H,1}$, and $E_{c,t}$ denotes the energy consumed by the receiver in the $t$th time slot.

The energy spent on sampling and decoding are comparable, and both should be included in the receiver's energy consumption model \cite{Yates1503}. We denote the energy consumption for each sampling and decoding operation as $E_{\rm{s}}$ and $E_{\rm{d}}$, respectively. The energy consumed for a reliable ACK feedback, denoted by $E_{\rm{f}}$, is also taken into account. The delay incurred by decoding and feedback are ignored for simplicity, and energies are assumed to be integer multiples of a basic energy quantum, $E$, without loss of generality.

\subsection{Transmission Protocol}
The transmitter transmits packets with a constant rate $R$ and a fixed transmit power $p_{\rm{tx}}$. A maximum number of transmission attempts for each packet, denoted as $K$ ($K>1$), is set \cite{Aprem1310}, \cite{MSharma1412}-\cite{PWu1112}. In other words, the transmitter keeps transmitting the same packet either until all transmission attempts are used or the ACK signal is fed back. We denote $k\in\mathcal{K}\triangleq \{0,1,\cdots,K-1\}$ as the transmission index if a packet is in its $k$th retransmission, i.e., the $\left(k+1\right)$th transmission. When the maximum transmission attempt is used or the ACK is fed back, the transmitter will start to transmit a new packet in the next time slot, which indicates that a packet will be dropped without being received successfully within $K$ transmission attempts. In particular, the packet drop probability (PDP) is adopted as the performance metric in this paper.

\subsection{Reception Policy}
At each time slot, the receiver decides whether to sample the received signal, i.e., $a_{s,t}\in \{0,1\}$, where $a_{s,t}=1$ means it will perform sampling in the $t$th time slot, and vice versa. If a packet is sampled, both the pilot symbols for channel training and the information bits are obtained. We assume perfect channel estimation and thus the value of $|h_{t}|^{2}$ is available at the receiver. The receiver decodes based on the sample of the packet obtained in the current time slot, similar as in the basic ARQ protocol \cite{MSharma1412,SZhou1503,PWu1112}, and it knows if a packet can be successfully decoded based on the knowledge of $|h_{t}|^{2}$. Specifically, if $|h_{t}|^{2}\geq \left(2^{R}-1\right)p_{\rm{tx}}^{-1}\triangleq |h_{\rm{th}}|^{2}$, the receiver will correctly decode the packet by consuming $E_{\rm{d}}$ units of energy. Otherwise, the sampled packet will be discarded and no energy will be used. When there is ACK feedback, the receiver should further determine the feedback strategy $\{a_{f,t}\}$, where $a_{f,t}=1$ means an ACK message will be fed back in time slot $t$ for a successfully decoded packet, and $a_{f,t}=0$ indicates nothing will be fed back. We refer the sampling and feedback strategies as the \emph{reception policy}, which is denoted as $\{\mathbf{a}_{t}\}\triangleq \{\langle a_{s,t},a_{f,t}\rangle \}$, and the decoding policy is included implicitly.

\subsection{ARQ Protocols}
Before proposing the ARQ protocol with adaptive feedback, we first introduce two baseline ARQ protocols, including ARQ without feedback and ARQ with non-adaptive feedback:

\textbf{ARQ without feedback:} In this protocol, ACK feedback is not allowed, i.e., $a_{f,t}=0$, and only $a_{s,t}$ needs to be designed ($\mathbf{a}_{t}\in\{\langle 0,0\rangle,\langle 1,0 \rangle\}$). Thus, the transmitter starts to transmit a new packet every $K$ time slots, regardless of the receiver's operation. This protocol can save the feedback energy, but may reduce the system throughput.

\textbf{ARQ with non-adaptive feedback:} In this protocol, ACK feedback is mandatory in the time slots where packets are successfully decoded, i.e., $\mathbf{a}_{t}\in\{\langle 0,0\rangle,\langle 1,1 \rangle\}$. Such an ACK feedback strategy is commonly-used in conventional ARQ protocols, e.g. \cite{Aprem1310}, \cite{MSharma1412}-\cite{PWu1112} and references therein.

\textbf{ARQ with adaptive feedback:} In the proposed ARQ protocol with adaptive feedback, the ACKs can be delayed or eliminated for the successfully decoded packets, which is more flexible compared to ARQ with non-adaptive feedback. Hence, $\mathbf{a}_{t}\in\{\langle 0,0\rangle, \langle 0,1 \rangle, \langle 1,0 \rangle,\langle 1,1 \rangle\}$, where $\mathbf{a}_{t}=\langle 0,1\rangle$ means the ACK is delayed for a packet decoded successfully in a previous time slot and is fed back in the $t$th time slot. Actually, the ARQs without feedback and with non-adaptive feedback can be regarded as special cases of the proposed ARQ protocol.

\begin{rmk}
With conventional energy-unconstrained receivers, delaying or eliminating the ACKs brings no benefit, i.e., ARQ with adaptive feedback reduces to ARQ with non-adaptive feedback. However, with EH receivers, the proposed ARQ protocol offers the option of feedback management, and thus helps improve the efficiency of utilizing the harvested energy. As will be seen in the coming sections, with the optimal reception policies, the proposed ARQ protocol improves both the PDP and throughput performance compared to the two baseline ARQ protocols.
\end{rmk}

\section{Problem Formulation}
Due to the intermittent and sporadic nature of EH, the reception policies should be carefully designed in order to maximize the system performance. In this section, a throughput constrained PDP minimization problem will be formulated, and its optimal solution will be developed in Section IV.

The PDP is defined as the ratio of the number of dropped packets to the number of transmitted data packets \cite{SZhou1503} as
\begin{equation}
p_{\rm{drop}}\triangleq \lim_{T\rightarrow \infty}\frac{\mathbb{E}\left[\sum_{t=1}^{T}\bm{1}_{D_{t}}\right]}{\mathbb{E}\left[\sum_{t=1}^{T}\bm{1}_{N_{t}}\right]}\overset{\left(\dag\right)}{=}
\frac{\lim\limits_{T\rightarrow \infty}\frac{1}{T}\mathbb{E}\left[\sum_{t=1}^{T}\bm{1}_{D_{t}}\right]}{\lim\limits_{T\rightarrow \infty}\frac{1}{T}\mathbb{E}\left[\sum_{t=1}^{T}\bm{1}_{N_{t}}\right]},
\label{defoutage}
\end{equation}
where $\bm{1}_{X}=1$ if event $X$ happens, and $D_{t}$ ($N_{t}$) denotes the event of packet drop (starting a new packet) at the $t$th time slot.\footnote{$\left(\dag\right)$ holds under the condition that both the limits of the numerator and denominator exist, which are assumed in this work.} Accordingly, the throughput, i.e., the average number of successfully received packets per time slot, is defined as
\begin{equation}
\mathcal{T} \triangleq \lim_{T\rightarrow \infty}\frac{1}{T}\mathbb{E}\left[\sum_{t=1}^{T}\bm{1}_{S_{t}}\right],
\label{deftp}
\end{equation}where $S_{t}$ denotes the event of decoding a packet correctly at time slot $t$. It can be verified that the following identity holds:
\begin{small}
\begin{equation}
\lim_{T\rightarrow \infty}\frac{1}{T}\mathbb{E}\left[\sum_{t=1}^{T}\bm{1}_{N_{t}}\right]=
\lim_{T\rightarrow \infty}\frac{1}{T}\mathbb{E}\left[\sum_{t=1}^{T}\left(\bm{1}_{D_{t}}+\bm{1}_{S_{t}}\right)\right].
\label{SNDidentity}
\end{equation}
\end{small}This is because the transmitter starts to transmit a new packet either if the ACK signal is received or the $K$th transmission attempt is used.

Denote the system state at the beginning of the $t$th time slot as $\mathbf{s}_{t}$, which can be represented by a triplet, i.e., $\mathbf{s}_{t}\triangleq \langle b_{t}, k_{t}, i_{t}\rangle $, where $b_{t}\in \{0,E,\cdots,ME\}$ ($ME\triangleq B_{\max}$) is the battery energy level, $k_{t}\in\mathcal{K}$ is the transmission index, and $i_{t}\in\{0,1\}$ denotes the reception state. In particular, $i_{t} = 0$ indicates the current transmitting packet has not been decoded correctly, and vice versa. A reception policy is mathematically characterized by a mapping from the system state to the action, i.e., $\Psi:\mathbf{s}\rightarrow \mathbf{a}$. We denote $d\left(\mathbf{s}_{t},\mathbf{a}_{t}\right)$, $n\left(\mathbf{s}_{t},\mathbf{a}_{t}\right)$ and $s\left(\mathbf{s}_{t},\mathbf{a}_{t}\right)$ as the expected values of $\bm{1}_{D_{t}}$, $\bm{1}_{N_{t}}$ and $\bm{1}_{S_{t}}$ when the system is in state $\mathbf{s}_{t}$ while action $\mathbf{a}_{t}$ is taken. Thus, the throughput constrained PDP minimization problem \cite{PWu1112} is formulated as
\begin{align}
&\mathcal{P}_{1}: \min_{\Psi} \frac{\lim\limits_{T\rightarrow \infty}\frac{1}{T}\mathbb{E}\left[\sum_{t=1}^{T}d\left(\mathbf{s}_{t},\mathbf{a}_{t}\right)\right]}{\lim\limits_{T\rightarrow \infty}\frac{1}{T}\mathbb{E}\left[\sum_{t=1}^{T}n\left(\mathbf{s}_{t},\mathbf{a}_{t}\right)\right]}\label{PDPobj}\\
&\ \ \ \ \ \ \mathrm{s.t.\ }\lim_{T\rightarrow \infty}\frac{1}{T}\mathbb{E}\left[\sum_{t=1}^{T}s\left(\mathbf{s}_{t},\mathbf{a}_{t}\right)\right]\geq \mathcal{T}_{\rm{th}}\label{tpconstraint}\\
&\ \ \ \ \ \ \ \ \ \ \ \ \mathbf{a}_{t}\in\mathcal{A}_{\mathbf{s}_{t}} \label{feasconstraint},
\end{align}
where the expectations are with respect to the sample-path of $\mathbf{s}_{t}$ and $\mathbf{a}_{t}$ induced by $\Psi$. (\ref{tpconstraint}) is the throughput constraint while (\ref{feasconstraint}) stands for the feasible action set, $\mathcal{A}_{\mathbf{s}_{t}}$, when the system is in state $\mathbf{s}_{t}$. Note that $\mathcal{A}_{\mathbf{s}_{t}}$ is related to the adopted ARQ protocol, and it is further specified as $\mathcal{A}_{\mathbf{s}_{t}}^{\rm{wo}}$, $\mathcal{A}_{\mathbf{s}_{t}}^{\rm{nA}}$ and $\mathcal{A}_{\mathbf{s}_{t}}^{\rm{A}}$ for the ARQs without feedback, with non-adaptive feedback and with adaptive feedback, respectively. Generally, $\mathcal{P}_{1}$ is a stochastic optimization problem with a fractional objective function, and we will investigate its optimal solutions for different ARQ protocols in the next section.

\section{Optimal Reception Policies}
In this section, we will investigate the optimal reception policies for the three ARQ protocols. For ARQ without feedback, a myopic policy will be shown to be optimal. For ARQs with (non-adaptive or adaptive) feedback, an optimal iterative algorithm will be proposed.

\subsection{ARQ without ACK Feedback}
For ARQ without feedback, the transmitter starts to transmit a new packet every $K$ time slots, i.e., $\lim\limits_{T\rightarrow \infty}\frac{1}{T}\mathbb{E}\left[\sum_{t=1}^{T}n\left(\mathbf{s}_{t},\mathbf{a}_{t}\right)\right]=1\slash K$. Thus, we can minimize the numerator in (\ref{PDPobj}) via optimizing the sampling strategy $\{a_{s,t}\}$. Since only $a_{s,t}$ needs to be designed, we use $\mathbf{a}_{t}$ and $a_{s,t}$ interchangeably in this subsection. According to (\ref{SNDidentity}), it is equivalent to maximize the throughput. Therefore, we consider the following throughput maximization problem
\begin{align}
\mathcal{P}_{2}: \max_{\Psi} \lim\limits_{T\rightarrow \infty}\frac{1}{T}\mathbb{E}\left[\sum_{t=1}^{T}s\left(\mathbf{s}_{t},\mathbf{a}_{t}\right)\right]\ \mathrm{s.t.}\ \mathbf{a}_{t}\in\mathcal{A}_{\mathbf{s}_{t}}^{\rm{wo}},
\end{align}
and the solution for $\mathcal{P}_{1}$ can be obtained based on Lemma 1.
\begin{lma}
Denote the optimal solution and the optimal value of $\mathcal{P}_{2}$ as $\Psi^{*}_{\mathcal{P}_{2}}$ and $\mathcal{T}^{*}_{\mathcal{P}_{2}}$, respectively. If $\mathcal{T}_{\mathcal{P}_{2}}^{*}<\mathcal{T}_{\rm{th}}$, $\mathcal{P}_{1}$ is infeasible. Otherwise, $\Psi_{\mathcal{P}_{2}}^{*}$ is the optimal solution for $\mathcal{P}_{1}$.
\end{lma}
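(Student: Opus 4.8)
The plan is to exploit the special structure, noted just before the lemma, that for ARQ without feedback the new-packet rate is deterministic and equal to $1/K$ for every reception policy, i.e. $\lim_{T\rightarrow\infty}\frac{1}{T}\mathbb{E}[\sum_{t=1}^{T}n(\mathbf{s}_{t},\mathbf{a}_{t})]=1/K$ independently of $\Psi$. First I would substitute this constant denominator into the PDP objective (\ref{PDPobj}), so that $\mathcal{P}_{1}$ reduces to minimizing the average packet-drop rate $K\cdot\lim_{T\rightarrow\infty}\frac{1}{T}\mathbb{E}[\sum_{t=1}^{T}d(\mathbf{s}_{t},\mathbf{a}_{t})]$ subject to the throughput constraint (\ref{tpconstraint}) and the action-set constraint (\ref{feasconstraint}).

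Next I would invoke the identity (\ref{SNDidentity}), taken at the level of the expected per-slot quantities $d,n,s$ (which are precisely the conditional expectations of the indicator events, so the event-level identity transfers by the law of total expectation). Writing $\mathcal{T}(\Psi)$ for the throughput achieved by a policy $\Psi\in\mathcal{A}_{\mathbf{s}_{t}}^{\mathrm{wo}}$, this gives $\lim_{T\rightarrow\infty}\frac{1}{T}\mathbb{E}[\sum_{t=1}^{T}d(\mathbf{s}_{t},\mathbf{a}_{t})]=\frac{1}{K}-\mathcal{T}(\Psi)$, whence the PDP equals $1-K\,\mathcal{T}(\Psi)$. Since $K>1$, this is a strictly decreasing affine function of $\mathcal{T}(\Psi)$, which establishes the core equivalence: over any fixed feasible set, minimizing the PDP is the same as maximizing the throughput.

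With this equivalence in hand, the two cases follow directly. For feasibility, I would observe that the constraint $\mathcal{T}(\Psi)\geq\mathcal{T}_{\rm{th}}$ can be satisfied by some $\Psi\in\mathcal{A}_{\mathbf{s}_{t}}^{\mathrm{wo}}$ if and only if the largest attainable throughput, namely $\mathcal{T}^{*}_{\mathcal{P}_{2}}$, is at least $\mathcal{T}_{\rm{th}}$; hence $\mathcal{T}^{*}_{\mathcal{P}_{2}}<\mathcal{T}_{\rm{th}}$ renders $\mathcal{P}_{1}$ infeasible. When $\mathcal{T}^{*}_{\mathcal{P}_{2}}\geq\mathcal{T}_{\rm{th}}$, the throughput maximizer $\Psi^{*}_{\mathcal{P}_{2}}$ is itself feasible for $\mathcal{P}_{1}$ (its throughput equals $\mathcal{T}^{*}_{\mathcal{P}_{2}}\geq\mathcal{T}_{\rm{th}}$), and because it attains the maximum throughput over all of $\mathcal{A}_{\mathbf{s}_{t}}^{\mathrm{wo}}$, which is a superset of the feasible set of $\mathcal{P}_{1}$, it simultaneously minimizes $1-K\,\mathcal{T}(\Psi)$ among all feasible policies. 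Therefore $\Psi^{*}_{\mathcal{P}_{2}}$ solves $\mathcal{P}_{1}$.

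There is little genuine obstacle here; the argument is essentially a change of variables enabled by the constant denominator. The one point I would check carefully is that $\mathcal{P}_{1}$ and $\mathcal{P}_{2}$ range over the identical action set $\mathcal{A}_{\mathbf{s}_{t}}^{\mathrm{wo}}$, so that the superset relation used in the optimality step is valid, together with the (implicit) assumption from footnote $(\dag)$ that the relevant long-run averages exist for the policies considered, so that $\mathcal{T}(\Psi)$ is well defined for each $\Psi$.
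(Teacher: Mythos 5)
Your proof is correct and follows essentially the same route as the paper: it exploits the constant new-packet rate $1\slash K$ under ARQ without feedback together with identity (\ref{SNDidentity}) to show that minimizing the PDP is equivalent to maximizing the throughput, and then handles the infeasibility case and the case where the unconstrained throughput maximizer automatically satisfies (\ref{tpconstraint}). The only difference is presentational: the paper establishes the equivalence in the text preceding the lemma and keeps the proof to the two-case argument, whereas you derive the affine relation $p_{\rm{drop}}=1-K\,\mathcal{T}\left(\Psi\right)$ explicitly, which is a slightly more self-contained rendering of the same idea.
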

\begin{proof}
When $\mathcal{T}_{\mathcal{P}_{2}}^{*}<\mathcal{T}_{\rm{th}}$, $\mathcal{P}_{1}$ is infeasible since the maximum throughput is less than $\mathcal{T}_{\rm{th}}$. If (\ref{tpconstraint}) is added, $\mathcal{P}_{2}$ is equivalent to $\mathcal{P}_{1}$. Thus, when $\mathcal{T}_{\mathcal{P}_{2}}^{*}\geq \mathcal{T}_{\rm{th}}$, (\ref{tpconstraint}) holds under $\Psi_{\mathcal{P}_{2}}^{*}$, i.e., $\Psi_{\mathcal{P}_{2}}^{*}$ is also optimal for $\mathcal{P}_{1}$.
\end{proof}

In order to solve $\mathcal{P}_{2}$, we first specify the feasible action set $\mathcal{A}_{\mathbf{s}_{t}}^{\rm{wo}}$: If $i_{t}=0$ and $b_{t}\geq E_{\rm{s}}+E_{\rm{d}}$, we have $\mathcal{A}_{\mathbf{s}_{t}}^{\rm{wo}}=\{0,1\}$. Otherwise, we have $\mathcal{A}_{\mathbf{s}_{t}}^{\rm{wo}}=\{0\}$, which indicates that the receiver can sample only when the current transmitting packet has not been decoded successfully, and meanwhile, the available energy is sufficient for both the sampling and decoding operations\footnote{In principle, the receiver can perform sampling after a packet has been decoded correctly. However, this has no contribution to the PDP/throughput performance. It can also sample when $E_{\rm{s}}\leq b_{t} < E_{\rm{s}}+E_{\rm{d}}$. However, without the decoding capability, the sampling energy is wasted since the receiver decodes based on the sample of a packet obtained in the current time slot. Without loss of optimality, theses two actions have been precluded in $\mathcal{A}_{\mathbf{s}_{t}}^{\rm{wo}}$.}.

When $a_{s}=0$, no energy is consumed and the reception state remains unchanged (unless it is in the $K$th transmission attempt due to the start of a new packet in the next time slot), i.e., no packet is decoded in the current time slot and $s\left(\mathbf{s},a_{s}=0\right)=0$. When $a_{s}=1$, the receiver performs sampling, which consumes $E_{\rm{s}}$ units of energy. With probability $1-F_{H}\left(|h_{\rm{th}}|^{2}\right)\triangleq p_{c}$, the packet will be successfully decoded (the reception state becomes $1$ in the next time slot unless it is in the $K$th transmission) and $E_{\rm{d}}$ units of energy will be consumed by the decoder; while with probability $\overline{p}_{c}=1-p_{c}$, the channel will be in outage and the decoder will stay inactive, i.e., the reception state remains $0$. Thus, $s\left(\mathbf{s},a_{s}=1\right)=p_{c}$.

Theoretically, $\mathcal{P}_{2}$ is an MDP problem and can be solved by standard algorithms, which, however, will have high complexity. We propose a myopic policy for $\mathcal{P}_{2}$, for which the receiver performs sampling when the available energy is sufficient for sampling and decoding, while the current transmitting packet has not been received correctly, i.e.,
\begin{equation}
a_{s,t}=\bm{1}_{i_{t}=0, b_{t}\geq E_{\rm{s}}+E_{\rm{d}}},
\label{myopic}
\end{equation}
which is optimal for $\mathcal{P}_{2}$ as shown in the following proposition.
\begin{prop}
The myopic policy in (\ref{myopic}) is optimal for $\mathcal{P}_{2}$.
\end{prop}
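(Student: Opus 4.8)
The plan is to treat $\mathcal{P}_{2}$ as an average-reward MDP and reduce it, in two steps, to a pure energy-overflow-minimization problem for which the myopic rule is transparently optimal. The first reduction is to show that throughput equals $p_{c}$ times the long-run sampling rate. The action $a_{s,t}$ is a deterministic function of the state $\mathbf{s}_{t}$, which depends only on past arrivals and channel realizations; hence $a_{s,t}$ is independent of the current indicator $\bm{1}_{|h_{t}|^{2}\ge|h_{\rm th}|^{2}}$. Each sample therefore produces a successful decode independently with probability $p_{c}$, so $\mathbb{E}[\#\text{decodes}]=p_{c}\,\mathbb{E}[\#\text{samples}]$, and maximizing throughput is equivalent to maximizing the expected long-run number of samples.

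The second reduction uses energy accounting. A sample always costs $E_{\rm s}$ and incurs the extra $E_{\rm d}$ only on a good channel, so total consumed energy equals $E_{\rm s}\cdot(\#\text{samples})+E_{\rm d}\cdot(\#\text{decodes})$, whose long-run rate is $(E_{\rm s}+p_{c}E_{\rm d})\times(\text{sampling rate})$. Since the battery is bounded by $B_{\max}$, the per-slot battery change averages to zero and the steady-state balance reads $\bar{H}=(E_{\rm s}+p_{c}E_{\rm d})\times(\text{sampling rate})+\bar{O}$, where $\bar{H}$ is the policy-independent harvest rate and $\bar{O}$ the overflow rate. This gives $\text{sampling rate}=(\bar{H}-\bar{O})/(E_{\rm s}+p_{c}E_{\rm d})$, so maximizing throughput is equivalent to minimizing the long-run overflow $\bar{O}$.

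It then remains to show that the myopic rule in (\ref{myopic}) minimizes $\bar{O}$. I would argue by a sample-path coupling: with the arrival and channel sequences fixed, the myopic policy spends energy at the earliest feasible instant, keeping the battery as low as possible while a packet is still undecoded, so it can overflow only when any competing feasible $\pi$ must overflow as well. A complementary marginal computation shows why nothing but overflow can matter: valuing stored energy at its throughput shadow price $p_{c}/(E_{\rm s}+p_{c}E_{\rm d})$ per quantum, the expected gain from taking one additional sample is $p_{c}-(E_{\rm s}+p_{c}E_{\rm d})\cdot\frac{p_{c}}{E_{\rm s}+p_{c}E_{\rm d}}=0$. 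In words, a sure decode on a good channel is worth exactly the $E_{\rm s}+E_{\rm d}$ it costs, while the $E_{\rm s}$ wasted on a bad channel is exactly offset; hence, absent overflow, every feasible policy ties, and the only way to strictly gain is to avoid overflow, which the myopic rule does by construction.

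The hard part is making the overflow coupling rigorous, and the obstacle is the reception-state dynamics. Decoding is a random byproduct of sampling that disables sampling for the remainder of a round (once $i_{t}=1$, the action set collapses to $\{0\}$), so after an early decode the myopic policy idles while its battery, and therefore its potential overflow, may build up. Consequently the naive invariant $b_{t}^{\rm myopic}\le b_{t}^{\pi}$ fails pathwise, and one must separate the pre-decode phase, where the myopic rule is uniformly most aggressive, from the post-decode idle phase, which is forced on every policy, and compare overflow only in expectation. Reconciling this divergence of $i_{t}$ within the coupling is the real work. As a fully rigorous fallback one can instead verify the Bellman optimality equation directly, proving by value iteration that the relative value function is non-decreasing in $b$ with per-quantum increments bounded by the shadow price $p_{c}/(E_{\rm s}+p_{c}E_{\rm d})$, which forces $a_{s}=1$ to attain the maximum whenever $1\in\mathcal{A}_{\mathbf{s}_{t}}^{\rm wo}$.
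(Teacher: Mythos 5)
Your first two reductions are correct and genuinely elegant: since the action in slot $t$ is measurable with respect to the past, each sample succeeds independently with probability $p_{c}$, so throughput $=p_{c}\times(\text{sampling rate})$; and the battery conservation law $\bar{H}=(E_{\rm s}+p_{c}E_{\rm d})\times(\text{sampling rate})+\bar{O}$ (valid because $b_{t}\in[0,B_{\max}]$ makes the average battery drift vanish) turns $\mathcal{P}_{2}$ into minimization of the overflow rate $\bar{O}$. Note, though, that your ``marginal computation'' with the shadow price is not an additional argument --- it is a restatement of this same conservation law. The genuine gap is that the proof stops exactly where the proposition becomes nontrivial: you never establish that the myopic rule minimizes $\bar{O}$. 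The sample-path coupling fails for precisely the reason you identify --- after an early (random) decode the myopic policy is forced to idle with a relatively high battery, so the invariant $b_{t}^{\rm myopic}\le b_{t}^{\pi}$ is false pathwise --- and ``compare overflow only in expectation'' names the missing argument rather than supplying it. The Bellman fallback is likewise only announced: monotonicity of the relative value function in $b$ and the per-quantum increment bound $p_{c}/(E_{\rm s}+p_{c}E_{\rm d})$ are structural claims that require their own induction, and even granted them you must still account for the fact that a successful decode both collects the reward and destroys all remaining sampling opportunities in the round (the jump of $i_{t}$ to $1$); the shadow-price bound on the value of stored energy does not by itself settle that trade-off.

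For comparison, the paper closes this gap with a finite-horizon exchange argument rather than an overflow ledger. It compares ``sample at $t$, then idle up to $t_{1}$'' against ``idle at $t$, then sample at $t_{1}$,'' where $t_{1}$ is the (random) first sampling instant of the optimal policy that skips slot $t$; the former is feasible whenever sampling at $t$ is. The immediate expected rewards coincide ($s(\mathbf{s}_{t},1)=s(\mathbf{s}_{t_{1}},1)=p_{c}$), and by conditioning jointly on the success events $S_{t}$ and $S_{t_{1}}$ --- i.e., coupling the decode outcome of the early sample with that of the deferred one --- both trajectories register the same success for the current packet while the early-spending trajectory, thanks to the finite cap $B_{\max}$, ends up with a battery at $t_{1}+1$ that is at least as large, hence a continuation value that dominates. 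Taking expectation over $t_{1}$ yields $V_{T}^{1}\ge V_{T}^{0}$ for all $T>t$. This joint conditioning is exactly the device that resolves the $i_{t}$-divergence you ran into: the exchange only ever compares two trajectories whose reception states differ between $t$ and $t_{1}$, instead of attempting to dominate the battery of an arbitrary competing policy for all time. If you wanted to complete your route, you would need essentially this same exchange to prove that myopic sampling minimizes overflow --- at which point the overflow reformulation, while illuminating, is no longer doing the work.
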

\begin{proof}
Please refer to the Appendix.
\end{proof}

Under mild conditions, the system state constitutes an ergodic Markov chain under the myopic policy. Thus, the steady state distribution $\{\pi_{\mathbf{s}}\}$ can be obtained by solving the balance equation, and the maximum throughput is given by $\mathcal{T}_{\mathcal{P}_{2}}^{*}=\sum_{b\geq E_{\rm{s}}+E_{\rm{d}},i=0}\pi_{\mathbf{s}}p_{c}$. Together with Lemma 1, we can either identify the infeasibility or the optimal solution for $\mathcal{P}_{1}$.

\subsection{ARQ with ACK Feedback}
The reception policies design for the ARQ protocols with feedback, either non-adaptive or adaptive, is much more challenging due to the fractional objective function, as the value of $\lim\limits_{T\rightarrow\infty}\frac{1}{T}\mathbb{E}\left[\sum_{t=1}^{T}n\left(\mathbf{s}_{t},\mathbf{a}_{t}\right)\right]$ is unknown. To develop the optimal policies, we first transform $\mathcal{P}_{1}$ into $\mathcal{P}_{3}$ with a weighted linear objective function, and reveal their relationship in Lemma 2.
\begin{lma}
If $\mathcal{P}_{1}$ is feasible, the minimum PDP, $p_{\rm{drop}}^{*}$, is achieved if and only if the optimal value of the following weighted minimization problem $\mathcal{P}_{3}$ is zero when $q=p_{\rm{drop}}^{*}$:
\begin{align}
&\mathcal{P}_{3}: \min_{\Psi} \lim\limits_{T\rightarrow\infty}\frac{1}{T}\mathbb{E}\left[\sum_{t=1}^{T}\left(d\left(\mathbf{s}_{t},\mathbf{a}_{t}\right)-q \cdot n\left(\mathbf{s}_{t},
\mathbf{a}_{t}\right)\right)\right]\nonumber\\
&\ \ \ \ \ \ \mathrm{s.t.\ \ }(\ref{tpconstraint}),(\ref{feasconstraint}).\nonumber
\end{align}
Moreover, the optimal solution for $\mathcal{P}_{3}$ with parameter $p_{\rm{drop}}^{*}$, $\Psi_{\mathcal{P}_{3}}^{*}$, is also optimal for $\mathcal{P}_{1}$.
\end{lma}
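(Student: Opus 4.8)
The plan is to recognize $\mathcal{P}_{1}$ as a linear-fractional program over the policy class and to apply a \emph{Dinkelbach}-type parametrization, in which $\mathcal{P}_{3}$ is precisely the parametrized subproblem. For a feasible policy $\Psi$, I would abbreviate the numerator and denominator of the objective in (\ref{PDPobj}) as $D(\Psi)\triangleq\lim_{T\rightarrow\infty}\frac{1}{T}\mathbb{E}[\sum_{t=1}^{T}d(\mathbf{s}_{t},\mathbf{a}_{t})]$ and $N(\Psi)\triangleq\lim_{T\rightarrow\infty}\frac{1}{T}\mathbb{E}[\sum_{t=1}^{T}n(\mathbf{s}_{t},\mathbf{a}_{t})]$, both of which exist under the standing assumption in footnote $(\dag)$. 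The first thing I would establish is the strict positivity $N(\Psi)>0$ for every feasible $\Psi$: since a new packet is started at least once every $K$ slots regardless of the reception policy, one has $N(\Psi)\geq 1/K>0$. This is what makes the ratio $D(\Psi)/N(\Psi)$ well defined and, crucially, lets me multiply the inequality $D(\Psi)/N(\Psi)\geq q$ through by $N(\Psi)$ without reversing its direction.

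With this secured, I would write $g(q)$ for the optimal value of $\mathcal{P}_{3}$ at parameter $q$, i.e. $g(q)=\min_{\Psi}[D(\Psi)-q\,N(\Psi)]$ over the common feasible set defined by (\ref{tpconstraint})--(\ref{feasconstraint}). For the necessity direction, I would suppose $p_{\rm{drop}}^{*}$ is attained by some $\Psi^{*}$, so that $D(\Psi^{*})=p_{\rm{drop}}^{*}N(\Psi^{*})$. By optimality of $p_{\rm{drop}}^{*}$, every feasible $\Psi$ obeys $D(\Psi)/N(\Psi)\geq p_{\rm{drop}}^{*}$; multiplying by $N(\Psi)>0$ gives $D(\Psi)-p_{\rm{drop}}^{*}N(\Psi)\geq 0$, with equality at $\Psi^{*}$. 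Hence $g(p_{\rm{drop}}^{*})=0$, which proves one implication and, at the same time, exhibits a minimizer of $\mathcal{P}_{3}$ at $q=p_{\rm{drop}}^{*}$.

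For the sufficiency direction together with the concluding claim, I would instead suppose $g(p_{\rm{drop}}^{*})=0$ and let $\Psi_{\mathcal{P}_{3}}^{*}$ attain it. Then $D(\Psi_{\mathcal{P}_{3}}^{*})-p_{\rm{drop}}^{*}N(\Psi_{\mathcal{P}_{3}}^{*})=0$, and dividing by $N(\Psi_{\mathcal{P}_{3}}^{*})>0$ yields $D(\Psi_{\mathcal{P}_{3}}^{*})/N(\Psi_{\mathcal{P}_{3}}^{*})=p_{\rm{drop}}^{*}$. Because the constraints of $\mathcal{P}_{3}$ and $\mathcal{P}_{1}$ coincide, $\Psi_{\mathcal{P}_{3}}^{*}$ is feasible for $\mathcal{P}_{1}$ and attains the PDP value $p_{\rm{drop}}^{*}$; it is therefore optimal for $\mathcal{P}_{1}$, which is exactly the ``moreover'' statement of the lemma.

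The hard part will not be any individual inequality -- each is elementary once $N(\Psi)>0$ is in place -- but rather justifying that the infimum $p_{\rm{drop}}^{*}$ is genuinely attained and that $g(q)$ is well posed as a minimum over the policy class. I would handle this by invoking the finite state--action structure of the underlying MDP, under which stationary (deterministic) optimal policies exist for the average-reward problem $\mathcal{P}_{3}$ and, through the equivalence just established, for $\mathcal{P}_{1}$ as well. Observing finally that $g$ is a pointwise minimum over finitely many functions of $q$ that are affine with slope $-N(\Psi)<0$, hence concave and strictly decreasing, pins down $q=p_{\rm{drop}}^{*}$ as the \emph{unique} root of $g$, which is what makes the Dinkelbach iteration in the sequel converge to the right parameter.
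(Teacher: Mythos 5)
Your proof is correct and follows exactly the route the paper intends: the paper's own ``proof'' is just a citation of the Dinkelbach-type equivalence (Theorem 1 in the Ng--Lo--Schober reference), and your argument---positivity $N(\Psi)\geq 1/K$, multiplying the ratio bound through by $N(\Psi)$ for necessity, and dividing by it for sufficiency plus the ``moreover'' claim---is precisely that standard argument written out. The only cosmetic imprecision is calling $g$ a minimum over \emph{finitely many} affine functions (optimal CMDP policies may be randomized, so the family is infinite), but concavity and strict decrease still follow since every slope $-N(\Psi)\leq -1/K<0$, so nothing in your conclusion is affected.
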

\begin{proof}
The proof is similar to that for Theorem 1 in \cite{DNg1207}, which is omitted due to space limitation.
\end{proof}

As $\lim\limits_{T\rightarrow \infty}\frac{1}{T}\mathbb{E}\left[\sum_{t=1}^{T}n\left(\mathbf{s}_{t},\mathbf{a}_{t}\right)\right]\in \left[1\slash K, 1\right]$, the objective function in $\mathcal{P}_{3}$ is strictly monotonic decreasing with $q$, which suggests an iterative algorithm to obtain $p_{\rm{drop}}^{*}$ and $\Psi_{\mathcal{P}_{1}}^{*}$ (known as the Dinkelbach approach \cite{Dinkelbach6703}), as summarized in Algorithm 1. Note that if $\mathcal{P}_{3}$ can be solved optimally, Algorithm 1 is guaranteed to converge to the optimal solution, which can be proved by following a similar approach as in \cite{DNg1207,Dinkelbach6703}. Next, we will develop the optimal solution for $\mathcal{P}_{3}$ and focus on the ARQ protocol with adaptive feedback. As a special case, the solution for ARQ with non-adaptive feedback can be obtained with minor modifications.

\begin{algorithm}[h] 
\caption{Optimal Iterative Algorithm for $\mathcal{P}_{1}$ for ARQ with Adaptive Feedback.} 
\label{alg1} 
\begin{algorithmic}[1] 
\STATE Initialize the maximum number of iterations $I_{\max}$ and the maximum tolerance $\Delta>0$, set $q=1$ and $n=0$;
\REPEAT
\IF {$\mathcal{P}_{3}$ is feasible}
\STATE Solve $\mathcal{P}_{3}$ for a given $q$ and obtain the optimal solution $\Psi_{\mathcal{P}_{3}}^{*}$ and the optimal value $\mathcal{W}_{q}^{*}$;
\ELSE
\STATE Break;
\ENDIF
\IF {$\mathcal{W}_{q}^{*} \geq - \Delta$}
\STATE {${\rm{Convergence}}=\ $\textbf{true} and \textbf{return} $\Psi_{\mathcal{P}_{1}}^{*}=\Psi_{\mathcal{P}_{3}}^{*}$};
\ELSE
\STATE Set $q=\frac{\lim\nolimits_{T\rightarrow \infty}\frac{1}{T}\mathbb{E}^{\Psi_{\mathcal{P}_{3}}^{*}}\left[\sum_{t=1}^{T}d\left(\mathbf{s}_{t},\mathbf{a}_{t}\right)\right]}{\lim\nolimits_{T\rightarrow \infty}\frac{1}{T}\mathbb{E}^{\Psi_{\mathcal{P}_{3}}^{*}}\left[\sum_{t=1}^{T}n\left(\mathbf{s}_{t},\mathbf{a}_{t}\right)\right]}$ and $n=n+1$;
\STATE {${\rm{Convergence}}=\ $\textbf{false}};
\ENDIF
\UNTIL{${\rm{Convergence}}=\ $\textbf{true} or $n=I_{\max}$}
\end{algorithmic}
\end{algorithm}

In order to solve $\mathcal{P}_{3}$, we first specify the feasible action set $\mathcal{A}_{\mathbf{s}_{t}}^{\rm{A}}$: When $i_{t}=0, E_{\rm{s}}+E_{\rm{d}}\leq b_{t}<E_{\rm{s}}+E_{\rm{d}}+E_{\rm{f}}$, we have $\mathcal{A}_{\mathbf{s}_{t}}^{\rm{A}}=\{\langle 0, 0\rangle, \langle 1, 0\rangle \}$, which corresponds to the scenario that the packet has not been decoded correctly, while the battery energy is only sufficient for sampling and decoding, i.e., the ACK cannot be fed back due to energy shortage; when $i_{t}=0,b_{t}\geq E_{\rm{s}}+E_{\rm{d}}+E_{\rm{f}}$, we have $\mathcal{A}_{\mathbf{s}_{t}}^{\rm{A}}=\{\langle 0, 0\rangle, \langle 1, 0\rangle, \langle 1,1\rangle \}$; when $i_{t}=1,b_{t}\geq E_{\rm{f}}$, we have $\mathcal{A}_{\mathbf{s}_{t}}^{\rm{A}}=\{\langle 0,0\rangle,\langle 0,1\rangle \}$ since the packet has been well received; otherwise, $\mathcal{A}_{\mathbf{s}_{t}}^{\rm{A}}=\{\langle 0,0\rangle\}$.

Denote the state transition probability as $\mathbb{P}\left(\mathbf{s}'|\mathbf{s},\mathbf{a}\right)$, which is the probability the system will be in state $\mathbf{s}'$ in the next time slot, given the current state is $\mathbf{s}$ and action $\mathbf{a}$ is taken. We derive the expressions of $\mathbb{P}\left(\mathbf{s}'|\mathbf{s},\mathbf{a}\right)$, $d\left(\mathbf{s},\mathbf{a}\right)$, $n\left(\mathbf{s},\mathbf{a}\right)$ and $s\left(\mathbf{s},\mathbf{a}\right)$ in the following lemma.

\begin{lma}
For the ARQ protocol with adaptive feedback, $n\left(\mathbf{s},\mathbf{a}\right)=\bm{1}_{k=0}$, $s\left(\mathbf{s},\mathbf{a}\right)=p_{c}\bm{1}_{i=0,a_{s}=1}$ and $d\left(\mathbf{s},\mathbf{a}\right)=\bm{1}_{i=0,k=K-1}\left(\overline{p}_{c}\bm{1}_{a_{s}=1}+\bm{1}_{a_{s}=0}\right)$. And $\mathbb{P}\left(\mathbf{s}'|\mathbf{s},\mathbf{a}\right)$ is given as
\begin{eqnarray}
\begin{small}
\begin{split}
&\mathbb{P}\left(\mathbf{s}'|\mathbf{s},\mathbf{a}=\langle 0,0\rangle\right)=\\
&\begin{cases}
\begin{split}
&\bm{1}_{k'=\left(k+1\right){\rm{mod}}K}\cdot\\&\sum\nolimits_{\forall e}\bm{1}_{b'=\phi\left(b,0,e\right)}\mathbb{P}_{\mathcal{E}}\left(e\right),\end{split} &\begin{split}&i=i',k<K-1\ {\rm{or}}\\&i'=0,k=K-1\end{split}\\
0, &else,
\end{cases}\\
&\mathbb{P}\left(\mathbf{s}'|\mathbf{s},\mathbf{a}=\langle 1,0\rangle\right)=\\
&\begin{cases}
\begin{split}
&\sum\nolimits_{\forall e}\big[\bm{1}_{i'=0,b'=\phi\left(b,E_{\rm{s}},e\right)}\overline{p}_{c}+\\
&\bm{1}_{i'=1,b'=\phi\left(b,E_{\rm{s}}+E_{\rm{d}},e\right)}p_{c}\big]\mathbb{P}_{\mathcal{E}}\left(e\right),
\end{split} &\begin{split}&i=0,b\geq E_{\rm{s}}+E_{\rm{d}},\\&k<K-1,k'=k+1\end{split}\\
\begin{split}
&\bm{1}_{i'=0}\sum\nolimits_{\forall e}\big[\bm{1}_{b'=\phi\left(b,E_{\rm{s}},e\right)}\overline{p}_{c}\\
&+\bm{1}_{b'=\phi\left(b,E_{\rm{s}}+E_{\rm{d}},e\right)}p_{c}\big]\mathbb{P}_{\mathcal{E}}\left(e\right),
\end{split} &\begin{split}&i=0,b\geq E_{\rm{s}}+E_{\rm{d}},\\&k=K-1,k'=0\end{split}\\
0, &else,
\end{cases}\\
&\mathbb{P}\left(\mathbf{s}'|\mathbf{s},\mathbf{a}=\langle 1,1\rangle\right)=\\
&\begin{cases}
\begin{split}
&\sum\nolimits_{\forall e}\big[\bm{1}_{k'=0,b'=\phi\left(b,E_{\rm{s}}+E_{\rm{d}}+E_{\rm{f}},e\right)}p_{c}+\\
&\bm{1}_{k'=\left(k+1\right){\rm{mod}}K,b'=\phi\left(b,E_{\rm{s}},e\right)}\overline{p}_{c}\big]\mathbb{P}_{\mathcal{E}}\left(e\right),
\end{split} &\begin{split}&i=i'=0,\\&b\geq E_{\rm{s}}+E_{\rm{d}}+E_{\rm{f}}\end{split}\\
0, &else,
\end{cases}\\
&\mathbb{P}\left(\mathbf{s}'|\mathbf{s},\mathbf{a}=\langle 0,1\rangle\right)=\\
&\begin{cases}
\sum\nolimits_{\forall e}\bm{1}_{k'=0,b'=\phi\left(b,E_{\rm{f}},e\right)}\mathbb{P}_{\mathcal{E}}\left(e\right), &\begin{split}&i=1,i'=0,\\&k>0,b\geq E_{\rm{f}},\end{split}\\
0, &else,\\
\end{cases}
\end{split}\nonumber
\end{small}
\end{eqnarray}
where $\phi\left(b,u,e\right)\triangleq \min\{b-u+e,B_{\max}\}$.
\end{lma}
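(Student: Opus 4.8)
The plan is to obtain every quantity by direct accounting from the transmission protocol, the energy model (\ref{batterydyn}), and the reception rules of Section II; no optimization is involved, so the proof reduces to a careful case analysis over the four admissible actions. I would first settle the three scalar rewards. For $n\left(\mathbf{s},\mathbf{a}\right)$, recall that the transmitter begins a fresh packet exactly when the transmission index is reset, so the event $N_{t}$ is equivalent to $k_{t}=0$; since this is fixed by the state alone, $n\left(\mathbf{s},\mathbf{a}\right)=\bm{1}_{k=0}$ irrespective of the action. For $s\left(\mathbf{s},\mathbf{a}\right)$, a decoding success can occur only if the current packet is still undecoded ($i=0$) and the receiver actually samples ($a_{s}=1$), in which case the channel is non-outage with probability $p_{c}=1-F_{H}\left(|h_{\rm{th}}|^{2}\right)$; taking the expectation of $\bm{1}_{S_{t}}$ over the channel gives $p_{c}\bm{1}_{i=0,a_{s}=1}$. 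For $d\left(\mathbf{s},\mathbf{a}\right)$, a drop is recorded only at the final attempt of an undecoded packet, i.e. $i=0$ and $k=K-1$; if the receiver samples, the drop happens only on an outage (probability $\overline{p}_{c}$), whereas if it does not sample the drop is certain, yielding $\bm{1}_{i=0,k=K-1}\left(\overline{p}_{c}\bm{1}_{a_{s}=1}+\bm{1}_{a_{s}=0}\right)$.

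For the transition kernel I would enumerate the four actions and, for each, track three updates: the reception bit $i\to i'$, the transmission index $k\to k'$, and the battery $b\to b'$. The battery update always has the form $b'=\phi\left(b,u,e\right)=\min\{b-u+e,B_{\max}\}$ from (\ref{batterydyn}), where $u$ is the energy actually spent on the chosen action and its realized outcome and $e\in\mathcal{E}$ is the fresh arrival, so the final averaging is a sum over $e$ weighted by $\mathbb{P}_{\mathcal{E}}\left(e\right)$. The index update follows one rule: $k$ advances to $\left(k+1\right){\rm{mod}}K$, which already resets to $0$ when the final attempt $k=K-1$ is exhausted; an ACK feedback additionally forces $k'=0$ even when $k<K-1$. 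Whenever $k'=0$ results, the new packet resets the reception bit to $i'=0$.

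With this template the four cases are routine. Under $\langle 0,0\rangle$ no energy is spent ($u=0$) and no decoding occurs, so $i'=i$ while $k<K-1$, and $\langle i',k'\rangle=\langle 0,0\rangle$ when $k=K-1$. Under $\langle 1,0\rangle$ the channel splits the outcome: with probability $p_{c}$ the packet is decoded ($i'=1$ if $k<K-1$, else $i'=0$ on the forced new packet) at cost $u=E_{\rm{s}}+E_{\rm{d}}$, and with probability $\overline{p}_{c}$ it is not ($i'=0$) at cost $u=E_{\rm{s}}$, since an outage leaves the decoder idle. Under $\langle 1,1\rangle$, admissible only when $i=0$, a success triggers an ACK so $k'=0$ at cost $u=E_{\rm{s}}+E_{\rm{d}}+E_{\rm{f}}$, while a failure sends nothing and advances $k$ at cost $u=E_{\rm{s}}$. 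Under $\langle 0,1\rangle$, admissible only when $i=1$ (for which reachability forces $k>0$, since $i=1$ requires a decode one or more slots earlier), the delayed ACK costs $u=E_{\rm{f}}$ and resets to $\langle i',k'\rangle=\langle 0,0\rangle$. Substituting the corresponding $u$ into $\phi$ and weighting the channel branches by $p_{c},\overline{p}_{c}$ reproduces every line of the claimed $\mathbb{P}\left(\mathbf{s}'|\mathbf{s},\mathbf{a}\right)$.

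The only genuinely delicate point is the boundary $k=K-1$, where the packet-drop event, the forced start of a new packet, and a possible last-attempt decoding all coincide; here one must verify that the index reset $k'=0$ and the reception reset $i'=0$ are applied consistently across the success and failure branches, and that the energy charged in the failure branch excludes $E_{\rm{d}}$ (no decoding under outage) and $E_{\rm{f}}$ (no ACK for an undecoded packet). Checking that these bookkeeping rules agree with the admissible action sets $\mathcal{A}_{\mathbf{s}}^{\rm{A}}$ specified just before the lemma completes the argument.
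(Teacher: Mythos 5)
Your proposal is correct and follows essentially the same route as the paper's own proof: direct bookkeeping of the energy cost, transmission index, and reception-state updates for each of the four actions, with the channel outcome weighted by $p_{c}$ and $\overline{p}_{c}$. The paper's version is simply terser (it notes the $\langle 0,0\rangle$ and $\langle 1,0\rangle$ cases mirror the no-feedback ARQ and omits the detailed kernel derivation for space), whereas you make the same case analysis explicit, including the correct handling of the $k=K-1$ boundary and the reachability argument for $k>0$ when $i=1$.
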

\begin{proof}
By definition, $n\left(\mathbf{s},\mathbf{a}\right)=\bm{1}_{k=0}$ holds. As discussed in Section IV-A, when $a_{s}=1$, the packet will be decoded correctly with probability $p_{c}$, while $a_{s}=0$, no packet will be decoded, which lead to the expressions of $s\left(\mathbf{s},\mathbf{a}\right)$ and $d\left(\mathbf{s},\mathbf{a}\right)$. The state transition when $\mathbf{a}_{t}=\langle 0,0\rangle$ and $\langle 1,0\rangle$ are similar to those in ARQ without feedback. When $\mathbf{a}_{t}=\langle 1,1\rangle$, if $|h_{t}|^{2}\geq |h_{\rm{th}}|^{2}$, the packet can be decoded correctly and $E_{\rm{s}}+E_{\rm{d}}+E_{\rm{f}}$ units of energy will be consumed which accounts for sampling, decoding and feedback, i.e., both $k$ and $i$ will be $0$ in the next time slot. When $\mathbf{a}_{t}=\langle 0,1\rangle$, $E_{\rm{f}}$ units of energy will be consumed and the transmitter will transmit a new packet in the next time slot. Detailed derivation for $\mathbb{P}\left(\mathbf{s}'|\mathbf{s},\mathbf{a}\right)$ is omitted due to space limitation.
\end{proof}

It is not difficult to identify that $\mathcal{P}_{3}$ is a constrained MDP (CMDP) problem. In the following, we provide a linear programming (LP) approach to obtain the optimal solution for $\mathcal{P}_{3}$ by solving the following LP problem $\mathcal{P}_{4}$ \cite{EAltman99}:

\vspace{-10pt}
\begin{small}
\begin{align}
&\mathcal{P}_{4}:\min_{x\left(\mathbf{s},\mathbf{a}\right)}\sum_{\forall \mathbf{s}}\sum_{\mathbf{a}\in\mathcal{A}_{\mathbf{s}}^{\rm{A}}}c\left(\mathbf{s},\mathbf{a}\right){x\left(\mathbf{s},\mathbf{a}\right)}\label{averageC}\\
&\ \ \ \ \ \ {\rm{s. t.}}\ \sum_{\mathbf{a}\in\mathcal{A}_{\mathbf{s}'}^{\rm{A}}}{x\left(\mathbf{s}',\mathbf{a}\right)}=\sum_{\forall \mathbf{s}}\sum_{\mathbf{a}\in\mathcal{A}_{\mathbf{s}}^{\rm{A}}}\mathbb{P}\left(\mathbf{s}'|\mathbf{s},\mathbf{a}\right){x\left(\mathbf{s},\mathbf{a}\right)},\forall \mathbf{s}'\label{balanceequation}\\
&\ \ \ \ \ \ \ \ \ \ \ \sum_{\forall \mathbf{s}}\sum_{\mathbf{a}\in\mathcal{A}_{\mathbf{s}}^{\rm{A}}}s\left(\mathbf{s},\mathbf{a}\right)x\left(\mathbf{s},\mathbf{a}\right)\geq \mathcal{T}_{\rm{th}}\label{xtpconstraint}\\
&\ \ \ \ \ \ \ \ \ \ \ \sum_{\forall \mathbf{s}}\sum_{\mathbf{a}\in\mathcal{A}_{\mathbf{s}}^{\rm{A}}}x\left(\mathbf{s},\mathbf{a}\right)=1,0\leq x\left(\mathbf{s},\mathbf{a}\right)\leq \bm{1}_{\mathbf{a}\in\mathcal{A}_{\mathbf{s}}^{\rm{A}}},\forall \mathbf{s},\mathbf{a}\label{definex}
\end{align}
\end{small}where $c\left(\mathbf{s},\mathbf{a}\right)\triangleq d\left(\mathbf{s},\mathbf{a}\right)-q\cdot n\left(\mathbf{s},\mathbf{a}\right)$, and $x\left(\mathbf{s},\mathbf{a}\right)$ is the occupation measure that gives the steady state probability that the system is in state $\mathbf{s}$ and action $\mathbf{a}$ is chosen. (\ref{averageC}) is an equivalent representation of $\lim\limits_{T\rightarrow \infty}\frac{1}{T} \mathbb{E}\left[\sum_{t=1}^{T}c\left(\mathbf{s}_{t},\mathbf{a}_{t}\right)\right]$, while (\ref{balanceequation}) is the balance equation. With the optimal solution to $\mathcal{P}_{4}$, $x^{*}\left(\mathbf{s},\mathbf{a}\right)$, we can construct an optimal stationary and randomized policy to $\mathcal{P}_{3}$ as $\Psi_{\mathcal{P}_{3}}^{*}:\mathbf{s}\rightarrow \mathbf{a}$ with probability $x^{*}{\left(\mathbf{s},\mathbf{a}\right)}\slash\sum_{\mathbf{a}\in\mathcal{A}_{\mathbf{s}}^{\rm{A}}}x^{*}\left(\mathbf{s},\mathbf{a}\right)$ (If $\sum_{\mathbf{a}\in\mathcal{A}_{\mathbf{s}}^{\rm{A}}}x^{*}\left(\mathbf{s},\mathbf{a}\right)=0$, $\mathbf{a}$ can be an arbitrary feasible action and set to be $\langle 0,0\rangle$). Besides, the update of $q$ in line 11 of Algorithm 1 can be rewritten as $q=\sum_{\mathbf{s},\mathbf{a}\in\mathcal{A}_{\mathbf{s}}^{\rm{A}}}d\left(\mathbf{s},\mathbf{a}\right)x^{*}\left(\mathbf{s},\mathbf{a}\right)\slash\sum_{\mathbf{s},\mathbf{a}\in\mathcal{A}_{\mathbf{s}}^{\rm{A}}}n\left(\mathbf{s},\mathbf{a}\right)x^{*}\left(\mathbf{s},\mathbf{a}\right)
$.

\begin{rmk}
For systems with an energy-unconstrained receiver, the myopic policy is optimal, i.e., $\mathbf{a}_{t}=\langle 1,0\rangle$ and $\langle 1,1\rangle,\forall t$ for ARQs without feedback and with non-adaptive feedback, respectively. Based on the renewal theory, the PDPs achieved by ARQs without feedback and with non-adaptive feedback are the same as $\overline{p}_{c}^{K}$, while the throughput achieved by ARQ with non-adaptive feedback is $p_{c}$, which is greater than $\left(1-\overline{p}_{c}^{K}\right)\slash K$ achieved by the one without feedback.
\end{rmk}

\section{Simulation Results}
In simulations, we assume the channel is Rayleigh fading with unit variance, $R=0.5$ bps\slash Hz, and $p_{\rm{tx}}=1$ W. The EH process is modeled as an i.i.d. Bernoulli process with EH probability $\rho$ (also termed as the normalized EH rate) \cite{YMao1412}, i.e., $\mathcal{E}\!=\!\{0,\hat{e}\}$, $\mathbb{P}_{\mathcal{E}}\left(0\right)\!=\!1-\rho$ and $\mathbb{P}_{\mathcal{E}}\left(\hat{e}\right)=\rho$. We use $\hat{e}=6E$ as an example, and set $B_{\max}=15E$, $E_{\rm{d}}=E_{\rm{s}}=3E$, $I_{\max}=20$, and $\Delta=10^{-6}$. For comparison, we introduce the myopic policies for the ARQ protocols with feedback: For ARQ with non-adaptive feedback, $\mathbf{a}_{t}\!=\!\langle 1,1\rangle$ when $i_{t}=0,b_{t}\geq E_{\rm{s}}\!+\!E_{\rm{d}}\!+\!E_{\rm{f}}$; while for ARQ with adaptive feedback, $\mathbf{a}_{t}\!=\!\langle 1,0\rangle$ when $i_{t}\!=\!0$, $E_{\rm{s}}+E_{\rm{d}}\!\leq\! b_{t}\!<\! E_{\rm{s}}\!+\!E_{\rm{d}}\!+\!E_{\rm{f}}$, $\mathbf{a}_{t}=\langle 1,1\rangle$ when $i_{t}\!=\!0$, $b_{t}\!\geq\! E_{\rm{s}}\!+\!E_{\rm{d}}\!+\!E_{\rm{f}}$, and $\mathbf{a}_{t}=\langle 0,1\rangle$ when $i_{t}=1,b_{t}\geq E_{\rm{f}}$.


\begin{figure}[h]
\centering
\includegraphics[width=0.45\textwidth]{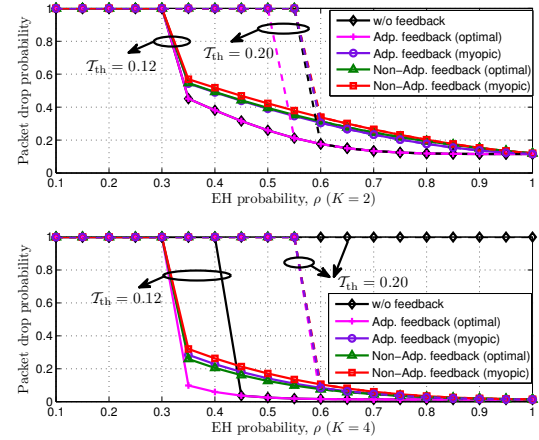}
\label{outage_EHrateTP}
\vspace{-10pt}
\caption{PDP vs. EH probability $\rho$, $E_{\rm{f}}=E$. We set the PDP to 1 if the throughput constraint can not be satisfied.}
\end{figure}

The PDPs achieved by different ARQs are shown in Fig. 1. In general, a higher EH rate is needed in order to meet a higher throughput requirement. Also, with a larger $K$, i.e., the maximum number of transmission attempts, the system has the potential to achieve a smaller PDP, but the throughput requirement becomes more difficult to satisfy. Besides, when the throughput constraint is met, ARQ without feedback and ARQ with adaptive feedback achieve the same performance under the optimal policies. This is because each packet uses all the transmission attempts to avoid packet drop, and the proposed ARQ protocol reduces to the ARQ without feedback. From this set of results, we can draw new design insights for communication systems with EH receivers:
\begin{itemize}
\item In contrast to systems with energy-unconstrained receivers, for which the myopic policies are optimal, with EH receivers, noticeable performance gains are provided by the optimal reception policies for the ARQs with feedback compared to the myopic policies. This shows the significant difference in retransmission policy design with EH receivers.

\item When the required throughput is achieved, the ARQ without feedback outperforms the one with non-adaptive feedback, which is different from systems with energy-unconstrained receivers, where both protocols achieve the same PDP as discussed in Remark 2. This is due to the \emph{feedback energy consumption} and the \emph{waste of transmission attempts} incurred by energy shortage at the receiver. Thus ARQ without feedback may be beneficial in certain scenarios with EH receivers.
\item Compared to the ARQ protocol with non-adaptive feedback, significant performance improvement is achieved by the proposed ARQ protocol, and a smaller EH rate is needed to meet the throughput requirement. These confirm the importance of intelligent feedback strategies for EH receivers and reveal the difference from systems with energy-unconstrained receivers, where the proposed ARQ protocol reduces to the one with non-adaptive feedback as emphasized in Remark 1.
\end{itemize}

\begin{figure}[h]
\centering
\includegraphics[width=0.45\textwidth]{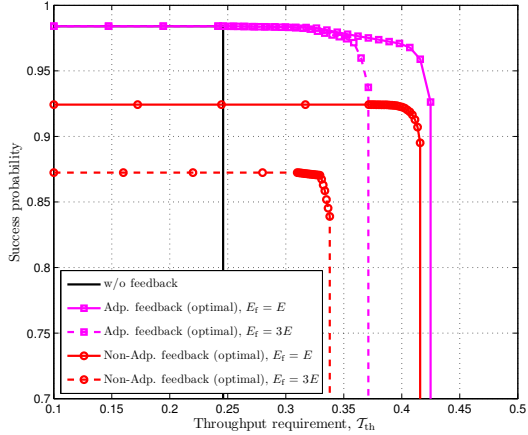}
\label{suc_Tp}
\vspace{-10pt}
\caption{Success probability vs. $\mathcal{T}_{\rm{th}}$, $K=4$ and $\rho=0.6$.}
\end{figure}

In Fig. 2, we show the achievable success probability\footnote{Success probability $\triangleq 1-$ PDP.}-throughput (S-T) regions for different ARQ protocols. Key observations can be drawn:
\begin{itemize}
\item The S-T region of the ARQ without feedback is a rectangle since any PDP-optimal policy is also throughput-optimal as discussed in Section IV-A.
\item For ARQs with feedback, a higher value of $E_{\rm{f}}$ leads to a smaller S-T region, and there exists a tradeoff between the achievable success probability and throughput, i.e., a proper operating point should be chosen to balance the system's reliability and efficiency.
\item With a small value of $\mathcal{T}_{\rm{th}}$, ARQ with adaptive feedback reduces to the one without feedback, similar as the case in  Fig. 1; while with a large value of $\mathcal{T}_{\rm{th}}$, the proposed ARQ protocol not only enjoys a higher success probability, but it is also competent to meet the throughput requirements compared to ARQ with non-adaptive feedback.
\end{itemize}

\section{Conclusions}
In this paper, we proposed a novel ARQ protocol with adaptive feedback for EH receivers, which offers the option of ACK feedback management and helps utilize the harvested energy more efficiently. The throughput constrained packet drop probability minimization problem was investigated, and the optimal reception policies were developed. Simulation results demonstrated the benefits of adaptive feedback in minimizing the wastage of the harvested energy and transmission attempts. In addition, this investigation has revealed new design insights for communication systems with EH receivers, and indicated the importance of a comprehensive consideration of different components of energy consumption.

\begin{appendix}
\textbf{Proof for Proposition 1:}
The optimal $a_{s,t}$ should maximize the long-term average expected throughput, i.e., $a_{s,t}^{*}=\arg\max\nolimits_{a\in\mathcal{A}_{\mathbf{s}_{t}}^{\rm{wo}}}\lim\nolimits_{T\rightarrow \infty}\frac{V_{T}^{a}}{T-t+1}$, where
$V_{T}^{a}\triangleq s\left(\mathbf{s}_{t},a\right)+\mathbb{E}\left[\sum_{\tau=t+1}^{T}s\left(\mathbf{s}_{\tau},a_{s,\tau}^{*}\right)|\mathbf{s}_{t},a\right]$. The first term in $V_{T}^{a}$ denotes the expected number of successfully decoded packets in time slot $t$, while the second term stands for the expected number of total successfully decoded packets from the $t+1$ to the $T$th time slot under the optimal policy $\{a_{s,\tau}^{*}\}$. In the following, we will show if $a_{s,t}=1\in\mathcal{A}_{\mathbf{s}_{t}}^{\rm{wo}}$, then $V_{T}^{1}\geq V_{T}^{0},\forall T>t$.

Suppose $a_{s,t}=0$ and let $t_{1}=\min\{\tau \geq t+1|\hat{a}_{s,\tau}^{*}=1\}$, where $\hat{a}_{s,\tau}^{*},t+1 \leq \tau \leq T$, is a sample-path of the optimal sampling policy given $a_{s,t}=0$. For convenience, $t_{1}\triangleq T+1$ if $\hat{a}_{s,\tau}^{*}=0, \forall \tau \leq T$. Note that $t_{1}$ is random and takes value from $\{t+1,\cdots,T+1\}$. Thus, $V_{T}^{0}=\mathbb{E}_{t_{1}}\left[s\left(\mathbf{s}_{t_{1}},1\right)\cdot \bm{1}_{t_{1}\leq T} + V_{T,t_{1}}^{0}\right]$,
where $V_{T,t_{1}}^{0}=\mathbb{E}\left[\sum_{\tau=t_{1}+1}^{T}s\left(\mathbf{s}_{\tau},\hat{a}_{s,\tau}^{*}\right)|\mathbf{s}_{t},\hat{a}_{s,\tilde{\tau}}^{*},\tilde{\tau}\leq t_{1}\right]\cdot \bm{1}_{t_{1}<T}$.

Suppose $a_{s,t}=1$, then $a_{s,\tau}=0,\forall t<\tau\leq  t_{1}$ is feasible for a given $t_{1}$. Therefore, $V_{T}^{1}\geq  s\left(\mathbf{s}_{t},1\right)+ \mathbb{E}_{t_{1}}\left[V_{T,t_{1}}^{1}\right]$,
where $V_{T,t_{1}}^{1}=\mathbb{E}\left[\sum_{\tau=t_{1}+1}^{T}s\left(\mathbf{s}_{\tau},\tilde{a}_{s,\tau}^{*}\right)|\mathbf{s}_{t},a_{s,\tilde{\tau}},\tilde{\tau}\leq t_{1}\right]\cdot \bm{1}_{t_{1}<T}$,
and $\tilde{a}^{*}_{s,\tau}, t_{1}+1 \leq \tau\leq T$, denotes the optimal sampling policy given $a_{s,t}=1$ and $a_{s,\tau}=0,t<\tau\leq t_{1}$.

By expanding $V_{T,t_{1}}^{0}$ and $V_{T,t_{1}}^{1}$ conditioned on the events $S_{t_{1}}$ and $S_{t}$, respectively, and utilizing the fact that $B_{\max}$ is finite, we have $s\left(\mathbf{s}_{t},1\right)+ V_{T,t_{1}}^{1} \geq  s\left(\mathbf{s}_{t_{1}},1\right)\cdot \bm{1}_{t_{1}\leq T} + V_{T,t_{1}}^{0}, \forall t_{1}$.
By taking the expectation with respect to $t_{1}$ for both sides of the inequality, we can conclude $V_{T}^{1}\geq V_{T}^{0}, \forall T>t$, i.e., $a_{s,t}=1$ performs no worst than $a_{s,t}=0$, which ends the proof.
\end{appendix}

\end{document}